\newenvironment{proof}{\paragraph{Proof:}}{\hfill$\square$}
\newtheorem{theorem}{Corollary}
\definecolor{green}{rgb}{0,0.62,0.45}
\begin{document}

\title{Absence of pure voltage instabilities in the third order model of power grid dynamics}

\author{Moritz Thümler}

\affiliation{Chair for Network Dynamics, Institute for Theoretical Physics and Center for Advancing Electronics Dresden (cfaed), Technische Universität Dresden, 01069 Dresden}

\author{Xiaozhu Zhang}
\affiliation{Chair for Network Dynamics, Institute for Theoretical Physics and Center for Advancing Electronics Dresden (cfaed), Technische Universität Dresden, 01069 Dresden}

\author{Marc Timme}
\affiliation{Chair for Network Dynamics, Institute for Theoretical Physics and Center for Advancing Electronics Dresden (cfaed), Technische Universität Dresden, 01069 Dresden}
\affiliation{Lakeside Labs, Lakeside B04b, Klagenfurt, 9020, Austria}

\date{\today}

\begin{abstract} 
Secure operation of electric power grids fundamentally relies on their dynamical stability properties. For the third order model, a paradigmatic model that captures voltage dynamics, 
three routes to instability are established in the literature, a pure rotor angle instability, a pure voltage instability and one instability induced by the interplay of both. Here we demonstrate that one of these routes, the pure voltage instability, is inconsistent with Kirchhoff's nodal law and thus nonphysical. We show that voltage collapse dynamics nevertheless exist in the absence of any voltage instabilities.
\end{abstract}

\maketitle

\begin{quotation}
nonlinear dynamics, network dynamics, power system stability, susceptance and admittance matrix, synchronization
\end{quotation}

\textbf{
Most aspects of our daily life essentially depend on a reliable supply of electrical power, thereby imposing severe challenges for stable operation of power grids that consist of many generators (producers of electric power) and loads (consumers of power) connected with transmission lines. From a perspective of network dynamical systems, these challenges translate to requiring steady states 
that are (asymptotically) stable against sufficiently small dynamical perturbations 
, such that all dynamical variables relax back to their steady synchronous (phase-locked) state with fixed phase differences, constant overall grid frequency as well as fixed voltage amplitudes. In contrast, instabilities may cause growth or fluctuations of phase differences, deviating and changing frequencies and non-constant voltage levels, all undesired in power grid operation.  
For the most basic model class of power system dynamics that covers voltage dynamics, three routes to instabilities have been established in the literature. Here we demonstrate that only two of these three remain in the physically relevant regime, while the third is excluded due to Kirchhoff's nodal law of current conservation. 
}

\section{Introduction}
Electric power supply substantially relies on the stable power grid dynamics. Two classes of system variables are especially important for reliable grid operation: grid frequency and terminal voltage amplitudes \cite{Casavola2007,machowski2011power,SimpsonPorco2016}. Instabilities to fluctuations and collapse of terminal voltages 
have been identified as key contributing factors for large-scale blackouts, for instance, in the northeastern United States (2003) and Athens/Greece (2004) \cite{machowski2011power,SimpsonPorco2016}. The phenomena of voltage collapse and voltage instability in power system models have been extensively studied in the literature,
see, e.g., \cite{Ayasun2004,Kwatny1986,Sharafutdinov2018}. 

Since more than a decade ago, beginning with the derivation of a dynamic network model from the physics of coupled synchronous machines \cite{Filatrella2008} and its collective dynamical phenomena such as phase-locking and synchronization in larger networks \cite{Rohden2012}, the self-organized nonlinear dynamics of entire power grid networks have drawn vast attention among research communities.
The collective dynamics of such systems were studied with respect to global asymptotic stability \cite{Filatrella2008,Rohden2012,Sharafutdinov2018,Schmietendorf2017}, real world statistical properties of fluctuations \cite{Schfer2018,Anvari2016} and induced response dynamics \cite{Zhang2019} up to dynamically induced cascading failures \cite{yang2017small, Schaefer2018}. All of such works have contributed to a conceptual understanding of the stability properties and in particular various types of instabilities in power grid dynamics on the system's level.
In one of the most fundamental dynamic models, a power grid network consists of nodes that are synchronous machines modeling electrical motors or generators. 
A range of models of this class with various degrees of detail have been studied in the literature \cite{machowski2011power, Witthaut2021Nonlinear}. A most commonly studied model consists of coupled swing equations, employing the \textit{second order model} of synchronous machines \cite{Filatrella2008}. Here, the independent variables describing the state of each machine $i$ are given by the deviation of the power angle $\Theta_i(t)$ from an operating point and its time derivative $\dot{\Theta}_i(t)$ quantifying the local deviation from the grid frequency, with a nominal value of $2 \pi \times 50$ Hz in Europe and $2\pi \times 60$ Hz in the US \cite{Owen1997}. Grid frequency constitutes an important quantity for grid operators to control the dynamical state of power grids \cite{Casavola2007,machowski2011power}.
The second order model of synchronous machines takes the terminal voltage amplitudes $E_i$ to be constant and therefore cannot address any instabilities resulting from the dynamics of voltages. The \textit{third order model} constitutes the next higher order model and enables a dynamical description of terminal voltage amplitudes \cite{machowski2011power,Schmietendorf2017,Sharafutdinov2018}. In particular,  three 
routes to instability are established in the literature \cite{Schmietendorf2017, Sharafutdinov2018} for the third order model: one pure rotor angle instability, one pure voltage instability as well as an instability related to the interplay of rotor angle and voltage dynamics. In this work we differentiate between linear (asymptotic) stability of the voltage subsystem, known as the pure voltage instability in the literature, and alterations of voltage variables upon parameter changes that are not related to a change of the linear stability of the voltage subsystem. We refer to the first one as \emph{voltage instability} or instability of the voltage subsystem, and the latter as \emph{voltage collapse}. 

In this article, we demonstrate that the pure voltage instability in the third order model is inconsistent with Kirchhoff's nodal law (also known as Kirchhoff's current law) and thus physically impossible. It emerges as an artifact of extending the parameter regime of the model to nonphysical configurations that violate Kirchhoff's nodal law. Without the constraint of Kirchhoff's nodal law, voltage instability may or may not emerge in the third order model, depending on the choice of system parameters. Employing Gershgorin's circle theorem, we analytically show that the 
relevant eigenvalues of the local Jacobian stay negative and bounded away from zero if Kirchhoff's law is respected. Thus, instabilities of the voltage subsystem are not captured by the third order model in the regime that is physically relevant. Moreover, we numerically demonstrate that voltage collapse is still observable in the third order model within the physically relevant parameter regime if reactive power demand cannot be met due to limitations in the dynamic transmission capacities.

\section{Necessary conditions for pure voltage instabilities in the third order model \label{sec:conditions}}

The loss of acceptable voltage levels has been observed in different forms in real world power systems \cite{Casavola2007}. Mathematical models of power systems predict the existence of both, voltage collapse and instabilities and capture transitions from normal operation to dysfunctional states by bifurcations induced by varying parameters across specific critical values. 

Let us consider the third-order model, a dynamical systems model of a power grid that consists of $N$ generators and consumers modeled as synchronous machines which are interconnected by alternating current (AC) transmission lines. 
The third-order model captures three dynamical variables per node $i$, a phase angle $\Theta_i(t)$, its instantaneous rotation frequency $\omega_i(t)=d\Theta_i/dt(t)$ and a voltage amplitude $E_i(t)$. 
The dynamics of one synchronous machine $i$ reads \cite{machowski2011power,Sharafutdinov2018}
\begin{align}
    \dot{\Theta}_i &= \omega_i \nonumber \\
    \dot{\omega}_i &= P_i - \alpha_i \omega_i - P^{\text{el}}_i(\boldsymbol{\Theta}, \boldsymbol{E}), \nonumber \\
    \dot{E}_i &=  E^f_i - E_i + X_i I_i(\boldsymbol{\Theta}, \boldsymbol{E}). \label{eq:third}
\end{align}
where the dot denotes differentiation with respect to time $t$. Here $\boldsymbol{\Theta} \in \mathbb{R}^N$ denotes the vector of the power angles, $\boldsymbol{\omega} = \boldsymbol{\dot{\Theta}}$ the angular frequency, both with respect to the grid reference frame (rotating at e.g., $\Omega = 2\pi \times 50$ Hz in Europe), and $\boldsymbol{E} \in \mathbb{R}_+^N$ the vector of terminal voltage amplitudes. Here, $\mathbb{R}_+$ denotes the set of non-negative real numbers such that each component $E_i\geq 0$. The remaining machine parameters are the power input or output $P_i\in\mathbb{R}$ (negative for consumers and positive for generators), the mechanical damping $\alpha_i >0$, the voltage set point $E^f_i > 0$, and the reactance $X_i\geq 0$ of the synchronous machine $i$. The coupling functions  $P_i^{\text{el}}:\,\mathbb{R}^{N} \times \mathbb{R}^N \rightarrow \mathbb{R}$ and $I_i:\,\mathbb{R}^{N} \times \mathbb{R}^N \rightarrow \mathbb{R}$ represent respectively the electrical powers and the currents exchanged between the $N$ synchronous machines through the transmission lines. 

An alternating current (AC) transmission line $(i,j)$ between adjacent nodes $i$ and $j$ is modelled by its admittance $Y_{ij}=G_{ij}+\imath B_{ij}\in \mathbb{C}$, where $G_{ij}$ is the conductance and $B_{ij}$ the susceptance. In general, $B_{ij}$ can be positive or negative. However, Ohm's law for amplitudes, such as $E_i$, involves the absolute value of the admittances 
\begin{align}
|Y_{ij}|= \sqrt{G_{ij}^2 + B_{ij}^2}.
\end{align} 
For our application it is therefore sufficient to consider the absolute values of the susceptances  $|B_{ij}|\geq0$ and for simplicity we just assume $B_{ij}\geq0$ for $i\neq j$. Lossless transmission, i.e.,~neglecting Ohmic losses $(G_{ij}=0)$, is taken as a reasonable assumption in high voltage power grid modelling \cite{machowski2011power}, such that the exchanged powers and currents read \cite{machowski2011power,Schmietendorf2017}
\begin{align}
    P^{\text{el}}_i(\boldsymbol{\Theta}, \boldsymbol{E}) &= \sum_{j=1}^N  B_{ij} E_i E_j  \sin(\Theta_i - \Theta_j)  \nonumber \\
    I_i(\boldsymbol{\Theta},\boldsymbol{E}) &= \sum_{j=1}^N B_{ij}  E_j \cos(\Theta_j - \Theta_i). \label{eq:coup}
\end{align}
Here symmetric susceptances ($B_{ij}=B_{ji}$) constitute a symmetric susceptance matrix $B=B^\mathsf{T} \in \mathbb{R}^{N\times N}$, with the diagonal elements $B_{ii}<0$ being self-susceptances. 
For the moment we describe the self-susceptances $B_{ii}$ in relation to the off-diagonal elements according to
\begin{align}
    B_{ii} = \Gamma_i - \sum_{j=1, j\neq i}^N B_{ij} < 0 \label{eq:paramter}
\end{align}
with an additional shunt susceptance parameter $\Gamma_i \in \mathbb{R}$. First, we consider the parameters $\Gamma_i$ as free model parameters and study their implications on the system's stability. Later we will discuss the physically relevant choice of $\Gamma_i$. The system Eq.\,\eqref{eq:third} with substituted coupling functions Eq.\,\eqref{eq:coup} reads
\begin{subequations}
\begin{align}
    \dot{\Theta}_i =& \omega_i \label{eq:third_1} \\
    \dot{\omega}_i =& P_i - \alpha_i \omega_i + \sum_{j=1}^N B_{ij} E_i E_j \sin(\Theta_j - \Theta_i) \label{eq:third_2} \\
    \dot{E}_i =& E^f_i + (X_i \Gamma_i -1) E_i \nonumber  \\&+  X_i\sum_{j=1,j\neq i}^N B_{ij} ( E_j\cos(\Theta_j - \Theta_i) - E_i) . \label{eq:third_3}
\end{align}
\end{subequations}
Power grids are operated near an equilibrium state which for the third-order model is a fixed point 
\begin{align}
\left(\boldsymbol{\Theta^*},\ \boldsymbol{\omega^*},\ \boldsymbol{E^*}\right), \label{eq:equilibrium}
\end{align}
given by a simultaneous solution to Eq.\,\eqref{eq:third_1}-\eqref{eq:third_3} at zero rates of change,
\begin{align} 
\dot{\Theta}_i =  \dot{\omega}_i =  \dot{E}_i = 0 \text{  for all  } i \in \{1,2,...,N\}. \label{eq:velocities}
\end{align}
The existence of equilibria depends on the specific choices of the nodal parameters $E^f_i, P_i, X_i$, the line susceptances $B_{ij}$ and $\Gamma_i$. For instance, an equilibrium only exist if the powers $P_i$ are in balance
\begin{align}
    0 = \sum_{i=1}^N P_i.
\end{align}
Furthermore, from the paradigmatic Kuramoto model\cite{Strogatz2000} it is well known that the coupling strengths have to be sufficiently large to compensate the powers $P_i$ in order to allow the system to settle into an equilibrium. Since for the third order model \eqref{eq:third_3} the equilibrium coupling strengths 
\begin{align}
    K_{ij} \coloneqq B_{ij}E_i^* E_j^*
\end{align}
are bound by 
\begin{align}
    K_{ij} \leq B_{ij} (E^f_l + X_l \mu)^2 \leq B_{ij} (E_l^f)^2,
\end{align}
with the index $l$ denoting the largest equilibrium voltage amplitude $E_l^* \geq E_i^*$ for all $i\in\{1,2,...,N\}$ and $\mu\leq 0$, we conclude that $E^f_i$ and the susceptances $B_{ij}$ with $i\neq j$ have to be sufficiently large. Furthermore, the reactances $X_i$ have to be sufficiently small. We derive necessary conditions for the existence of a fixed point in the appendix. 

Fixed voltages $E_i^*$ and fixed frequencies $\omega_i^*$ are desired in power grid operations, as well as that the system relaxes back to equilibrium when exposed to small perturbations.

Whether the system relaxes back towards the equilibrium, is characterized by the linear stability of the corresponding fixed point of the system\cite{strogatz:2000}. At a fixed point $\left(\boldsymbol{\Theta^*},\ \boldsymbol{0},\ \boldsymbol{E^*}\right)$, the evolution of the linear response $\left(\boldsymbol{\vartheta}, \boldsymbol{\nu}, \boldsymbol{\epsilon}\right)$ of the system [\eqref{eq:third_1}-\eqref{eq:third_3}] is governed by
%
% \begin{align}
% 	 \Theta_i(t) &= \Theta^*_i + \vartheta(t) \nonumber \\
%     \omega_i(t) &= 0 +  \nu_i(t) \nonumber \\
%     E_i(t) &= E^*_i + \epsilon_i(t)
% \end{align}
%
\begin{align}
    \begin{pmatrix} \boldsymbol{\dot{\vartheta}} \\
    \boldsymbol{\dot{\nu}} \\
    \boldsymbol{\dot{\epsilon}}
    \end{pmatrix} = \begin{pmatrix}
                0 & I_N & 0 \\
                 \Lambda & -\alpha I_N &  A  \\
                 A^\mathsf{T} & 0 &  C \end{pmatrix}  \begin{pmatrix} \boldsymbol{{\vartheta}} \\
    \boldsymbol{{\nu}} \\
    \boldsymbol{{\epsilon}} 
    \end{pmatrix} \eqqcolon J  \begin{pmatrix} \boldsymbol{{\vartheta}} \\
    \boldsymbol{{\nu}} \\
    \boldsymbol{{\epsilon}} 
    \end{pmatrix} , 
\end{align}
where $I_N \in \mathbb{R}^{N\times N}$ denotes an identity matrix and $\Lambda, A, C\in \mathbb{R}^{N \times N}$ are submatrices of the Jacobian matrix $\boldsymbol{J}$. The submatrices are defined via their matrix elements 
\begin{eqnarray}
\Lambda_{ij} &=& \left\{ \begin{array}{l l} B_{ij}  E_i^*E_j^*    \cos(\Theta_j^*-\Theta_i^*)   &\text{ for } i \ne j\\ -\sum_{k\ne i} B_{ik}  E_i^* E_k^*  \cos(\Theta_k^*-\Theta_i^*)   &\text{ for } i = j \end{array} \right. \nonumber \\
A_{ij} &=& \left\{ \begin{array}{l l} B_{ij} E_i^*  \sin(\Theta_j^*-\Theta_i^*) &\text{      \ \ \ \ \ \ \ \ \ \ for } i \ne j \\ \sum_k B_{ik} E_k^*  \sin(\Theta_k^* -\Theta_i^*) &\text{ \ \ \ \ \ \ \ \ \ \ for } i = j \end{array} \right. \nonumber \\
C_{ij} &=&  \left\{ \begin{array}{l l} X_i B_{ij} \cos(\Theta_j^* - \Theta_i^*) &\text{ \ \ \ \ \ \ \ \ \ \ for } i \ne j \\ X_i\Gamma_i  - 1 -X_i\sum_{k\neq i} B_{ik}  &\text{ \ \ \ \ \ \ \ \ \ \ for } i = j\end{array}   \right.. \nonumber \\ \label{eq:jac_def}
\end{eqnarray}
The matrix $\boldsymbol{J}$ has one eigenvalue $\lambda_0=0$
%\outgreen{, see references\cite{Rohden2012,Sharafutdinov2018,Schmietendorf2017} belonging to}
corresponding to the eigenvector $\boldsymbol{v}_0 = (\boldsymbol{1},\boldsymbol{0}, \boldsymbol{0})^\mathsf{T}$, indicating that the system is marginally stable along $\boldsymbol{v}_0$ \cite{Rohden2012,Sharafutdinov2018,Schmietendorf2017}. 
Nevertheless, since 
a shift along $\boldsymbol{v}_0$ does not change the physical state of the system, we %\outgreen{. We}
thus only consider the system's linear stability in the orthogonal space\cite{Sharafutdinov2018}
\begin{align}
    \mathcal{D}^\perp = \left\{ \boldsymbol{x} \in \mathbb{R}^{3N}| \boldsymbol{x} \boldsymbol{v_0} = 0 \right\}.
\end{align}
As shown by Sharafutdinov et al. \cite{Sharafutdinov2018}, the asymptotic stability of the system in 
$\mathcal{D}^\perp$ (a negative definite $J$) implies that both submatrices $\Lambda$, the rotor angle subsystem, and $C$, the voltage subsystem, are negative definite themselves, i.e.,
\begin{align}
    J \text{ is negative definite} \Rightarrow \Lambda \text{ and } C \text{ are negative definite.}
\end{align}
In this way, three routes to instability in the third order model of synchronous machines are established \cite{Sharafutdinov2018}: One pure rotor angle instability, where $\Lambda$ loses negative definiteness; one pure voltage instability, where $C$ loses negative definiteness; and a third route resulting from an interplay between both subsystems where a fixed point for both voltage and rotor angle equation cannot be determined simultaneously.

In particular, if the real parts of any eigenvalue of either one of the two submatrices $C$ or $\Lambda$ crosses zero from below (excluding $\lambda_0 = 0$ for $\Lambda$), the entire systems' equilibrium becomes linearly unstable. Related earlier work has shown that one condition 
for $\Lambda$ to be negative definite is\cite{machowski2011power,Manik2014}
\begin{align}
    |\Theta_j - \Theta_i| \leq \frac{\pi}{2} \label{eq:stab_rot}
\end{align}
for all adjacent synchronous machines $i$ and $j$, i.e., those directly connected by a transmission line. We now focus on the analysis of the voltage subsystem characterized by the matrix $C$ by applying the Gershgorin disk theorem \cite{gerschgorin31}. 
The broadly applicable theorem states \cite{gerschgorin31,stoer1989numerische, timme2004topological, timme2008simplest, freund2007stoer} that for any square matrix $M \in\mathbb{C}^{N\times N}$ all the eigenvalues $\lambda_j^{M}$ for all $j \in \{1,2,...,N\}$ are in the union 
\begin{align}
    \lambda_j^M \in \bigcup\limits_{i=1}^{N} \mathcal{G}_i ,
\end{align}
of $N$ disks 
%$\mathcal{G}_i$ 
%
\begin{align}
    \mathcal{G}_i \coloneqq  \left\{ z \in \mathbb{C}\ \ |\ \  |z - M_{ii}| \leq \sum_{j\neq i} |M_{ij}| \right\}.
\end{align}
The diagonal elements $M_{ii}$ define the center of the disk, while the sum across the absolute values of the off-diagonal elements of the same row defines its radius. 
Since linear stability of the voltage subsystem alone is ensured if all eigenvalues $\lambda_i^C$ of the matrix $C$ have a negative real part, we evaluate under which conditions all the Gershgorin disks are entirely on the left-hand side of the imaginary axis. 

To this end, we define the directed margin $d_i$

\begin{align}
    d_i \coloneqq \sup\left\{ \text{Re}(q)\, |\, q\in \mathcal{G}_i \right\},
\end{align}
between the imaginary axis and the Gershgorin disk. A negative margin for all $i$ ensures linear stability of the voltage subsystem characterized by $C$. Thus, for parameters where all $d_i< 0$, voltage instabilities do not occur.
\begin{figure}[!ht]
    \centering
    \includegraphics{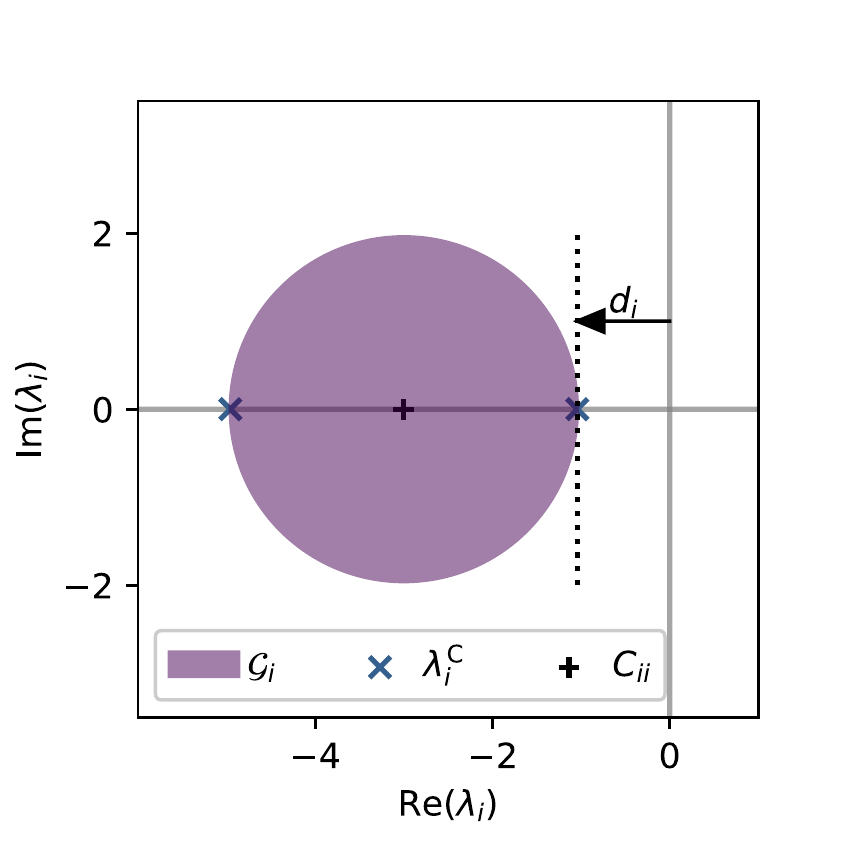}
    \caption{\textbf{Negative margins $d_i$ ensure linear stability of the voltage subsystem:} The voltage subsystem of $N=2$ coupled third-order synchronous machines is evaluated in terms of the margin $d_i$. The two Gershgorin disks coincide, and the true positions of the eigenvalues are precisely on the borders of the disks. The parameters are $P_1=-P_2 = 1.5$, $B_{12}=B_{21}=2$, $E^f=2$, $1.0$ and $\Gamma_1 = \Gamma_2 = 0$.
    }
    \label{fig:margin}
\end{figure}
For general setups of the network and machine parameters, the margins $d_i$ of the symmetric matrix $C$ satisfy

\begin{align}
    d_i =& C_{ii} + \sum_{j\neq i }^N |C_{ij}| \nonumber \\
      =&  -1+X_i \Gamma_i -X_i \sum_{j=1,j\neq i}^N B_{ij} \nonumber \\ &+ X_i \sum_{j=1,j\neq i}^N |B_{ij} \cos(\Theta_j^* - \Theta_i^*)| \nonumber \\
      =& -1+X_i \Gamma_i+X_i \sum_{j\neq i}^N B_{ij} (|\cos(\Theta_j^* - \Theta_i^*)| - 1) \nonumber \\
      \leq& -1 +  X_i \Gamma_i . \label{eq:margin}
\end{align}
In the first step, we apply the definition of the Gershgorin disk $\mathcal{G}_i$ to the matrix $M=C$. In the second step, we substitute the matrix elements of $C$ according to Eq.\,\eqref{eq:jac_def}, exploiting that $X_i>0$. In the third step, as $B_{ij}>0$ for $i\neq j$, we factor it out and regroup the terms. Finally, bounding the cosine function by its upper bound $1 = \max \{ \cos(x)\, | \, x\in\mathbb{R}\}$ provides an upper bound for $d_i$. 
We set the upper bound $ X_i \Gamma_i - 1  = 0$  of $d_i$ and obtain
\begin{align}
    X_{\text{crit}}(\Gamma) = \frac{1}{\Gamma},%\green{\text{[@Moritz: ``='' here?]}}
\end{align}
a lower bound for the critical parameter $X_{\text{crit}}$ with $\Gamma = \max  \{\Gamma_i\,|\, i \in \{1,2,...,N\} \} $. For all $0 \leq X \leq X_{\text{crit}} (\Gamma)$ the matrix $C$ is negative definite as shown via the Gershgorin disk theorem. For $X>X_{\text{crit}}$ the matrix may have positive eigenvalues $\lambda_i^{C}>0$ but due to the upper bound approximation in Eq.\,\eqref{eq:margin} this is not guaranteed, hence referred to as potentially unstable region in Fig.\,\ref{fig:parameterspace*}. For our further analysis, we will rely on the stable regime and do not need further knowledge about the potentially unstable region.
We have derived a bound $X_\text{crit}(\Gamma)$ for, which
\begin{align}
	&X< X_{\text{crit}}(\Gamma)\iff\text{voltage stability} \nonumber \\
	&X\geq X_{\text{crit}}(\Gamma)\iff\text{potential voltage instability},
\end{align}
holds.

\begin{figure}[!ht]
    \centering
    \includegraphics{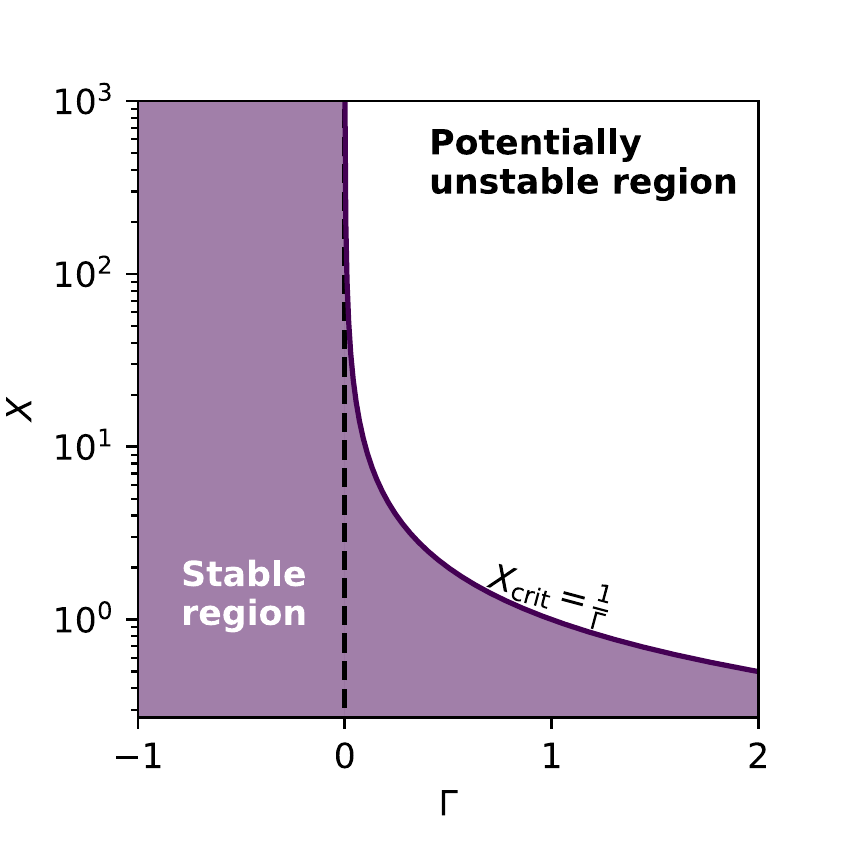}
    \caption{\textbf{The minimal stable region of the voltage subsystem depends on the parameter $\Gamma$:} The matrix of the voltage subsystem  is negative definite in the purple region, such that no pure voltage instability occurs here. In particular, for $\Gamma \leq 0$ the voltage subsystem alone is linearly stable for all $X\in \mathbb{R}^+$. This analysis holds for arbitrary system sizes $N$, parameter settings and thus network topologies.}
    \label{fig:parameterspace*}
\end{figure}
\section{Absence of pure voltage instabilities}
The above analysis proves for $X_i \leq X_\text{crit}(\Gamma)$ the linear stability of the voltage subsystem. However, it does not take into account whether fixed points exist in the potentially unstable region at all. Hence, we do not know at this point whether a transition to an unstable voltage subsystem at all is possible or not. 
For instance, in the simplest system of $N=2$ coupled third-order synchronous machines, one can show that all fixed points are found in the stable region of the voltage subsystem for arbitrary choices of $\Gamma$. The equilibrium of this system configuration is explicitly given via the set of equations,
\begin{align}
    0 &= \omega_1^* \nonumber \\
    0 &= \omega_2^* \nonumber \\
    0 &= P_1 - \alpha_1 \omega_1^* + B_{12} E_1^* E_2^* \sin(\Theta_2^* - \Theta_1^*) \nonumber \\
    0 &= P_2 - \alpha_2 \omega_2^* + B_{12} E_1^* E_2^* \sin(\Theta_1^* - \Theta_2^*) \nonumber \\
    0 &= E^f - E_1^*  + X\Gamma E_1^*  + X B_{12} ( E_2^* \cos(\Theta_2^* - \Theta_1^*) - E_1^*  ) \nonumber \\
    0 &= E^f - E_2^*  + X\Gamma E_2^*  + X B_{12} ( E_1^* \cos(\Theta_2^* - \Theta_1) - E_2^* ), 
\end{align}
which effectively reduces to
\begin{subequations}
\begin{align}
    0 &= P + B_{12} E^* E^* \sin(\Delta \Theta^*) \label{eq:fp2_1} \\
    0 &= E^f - E^* + X\Gamma E^* + X B_{12} ( E^* \cos(\Delta \Theta^*) - E^*  ) \label{eq:fp2_2},
\end{align}
\end{subequations}
with $E_1 = E_2 = E^*$ (which follows from subtracting the voltage equations from one another) and $\Delta \Theta^* = \Theta_2^* - \Theta_1^*$. In this configuration, $C$ reads
\begin{align}
    C = \begin{pmatrix}   
                -X(B_{12}-\Gamma) - 1 & XB_{12} \cos(\Delta \Theta) \\
                XB_{12}\cos(\Delta \Theta) & - X(B_{12}-\Gamma) - 1
    \end{pmatrix}
\end{align}  
and has the two real eigenvalues 
\begin{align}
    \lambda_{\pm}^{C} =  -X(B_{12}-\Gamma) - 1 \pm   XB_{12} \cos(\Delta \Theta).
\end{align}
To investigate where the voltage subsystem changes its linear stability, we analyze the point where its largest eigenvalue $\lambda_+^{C} = 0$ such that
\begin{align}
      XB_{12} \cos(\Delta \Theta) &= X(B_{12}-\Gamma) + 1.
\end{align}
We substitute the latter into Eq.\,\eqref{eq:fp2_2} and find
\begin{align}
    0 &= E^f - E^* + X\Gamma E^* + E^* (  XB_{12} - X\Gamma + 1 )  - X B_{12}E^*  \nonumber \\
    0 &= E^f. \nonumber \\
\end{align}
Given that $E^f$ is a strictly positive machine parameter, we conclude that for $N=2$, no transition from a stable equilibrium towards an unstable equilibrium of the voltage subsystem exists, under any configuration of all parameters  of the model system. This finding constitutes a contradiction to previously published statements in the literature about an instability of the $2$-node third-order model of synchronous machines\cite{Sharafutdinov2018}, where for a positive $\Gamma$ a transition towards an unstable voltage subsystem has been demonstrated numerically.  
The estimator $X_{\text{crit}}(\Gamma)$ shows no options for pure voltage instabilities
for general system configurations if $\Gamma_i$ for all $i$ are set to negative values. But what are physically possible choices for $\Gamma$?

\section{Kirchhoff's nodal law requires $\Gamma<0$}

In this section we motivate the physically relevant parameter ranges for the shunt susceptances $\Gamma$ starting with a sample grid and deriving the form of the nodal susceptance matrix based on Kirchhoff's nodal law (also known as Kirchhoff's current law).
Let us consider a network of electric transmission lines as depicted in Fig\,\ref{fig:sample_grid}. The nodes $i\in \{1,2,3\}$, simply intersections of transmission lines, connect the synchronous machines (as indicated by the dashed boxes) with the transmission lines of the power grid. 
Kirchhoff's nodal law states that the currents flowing into any such intersection nodes have to balance the current flowing out of the node. 
In other words, charge is a conserved quantity. Kichhoff's nodal law is closely interlinked with a specific choice of the shunt susceptance parameter $\Gamma$. 
For illustration, we apply the current law to the network configuration displayed in Fig.\,\ref{fig:sample_grid}. We know, for instance, that the sum across all currents at the node $i=1$ has to be zero
\begin{align}
    0 &= \tilde{I}_1 + \tilde{I}_{21} + \tilde{I}_{31} + \tilde{I}_{1a} + \tilde{I}_{1b} \label{eq:kirch},
\end{align}
with the current $\tilde{I}_1$ to be the one flowing into the synchronous machine $1$ indicated by the left dashed box in Fig.\,\ref{fig:sample_grid}. The currents $\tilde{I}_{xy}$ refer to the currents flowing across the susceptance $B_{xy}$.
The tilde $\tilde{\cdot}$ indicates that the quantity is described in the grid reference frame rather than in the reference frame of the synchronous machine. Essentially, voltages on two perpendicular rotor axes (of which one is assumed to be identically zero in the third order model) of the synchronous machine are rotated by the power angle $\Theta_i$ to the grid reference frame. More details are provided in, e.g., Schmietendorf et al.\cite{Schmietendorf2017}.
In order to compute the current, $\tilde{I}_1$ we resolve Eq.\,\eqref{eq:kirch} and obtain
\begin{align}
    \tilde{I}_1 =  - \tilde{I}_{21} - \tilde{I}_{31} - \tilde{I}_{1a} - \tilde{I}_{1b} \label{eq:kirch2}.
\end{align}

Furthermore, applying Ohm's law $I = B \Delta E$ to both lines, we find
\begin{align}
    \tilde{I}_1 =  B_{12} (\tilde{E}_2 - \tilde{E}_1 ) + B_{13} (\tilde{E}_3 - \tilde{E}_1)  - B_{1a} \tilde{E}_1  - B_{1b} \tilde{E}_1, \label{eq:shunt}
\end{align}
where we set the ground node's voltage amplitude to zero, $E_0 = 0$. We can write Eq.\,\eqref{eq:shunt} in vector form and find
\begin{align}
    \tilde{I}_1 = \begin{pmatrix} -B_{12}-B_{13} -B_{1a} - B_{1b} & B_{12} & B_{13}  \end{pmatrix} \begin{pmatrix} 
            \tilde{E}_1 \\ \tilde{E}_2 \\ \tilde{E}_3
    \end{pmatrix}
\end{align}
This is the first component of the generalized Ohm's law for the network in Fig.\,\ref{fig:sample_grid} that in general reads 
\begin{align}
    \boldsymbol{\tilde{I}} = B \boldsymbol{\tilde{E}},
\end{align}
with the susceptance matrix $B\in \mathbb{R}^{3\times 3}$, from which we identify the first diagonal element 
\begin{align}
    B_{11} = - B_{12} - B_{13} - B_{1\alpha} - B_{1\beta}.
\end{align}
We thereby identify the parameter $\Gamma_1$ for this system
\begin{align}
\Gamma_1 = -B_{1b} - B_{1a} \leq 0.
\end{align}
This generalizes to the overall form Eq.\,\eqref{eq:paramter} for the self susceptance
\begin{align}
    B_{ii} =  \Gamma_i - \sum_{j\neq i } B_{ij},
\end{align}
for arbitrary network topologies  with the shunt susceptance $\Gamma_i  \leq 0 $ corresponding to the negative of the line susceptances connecting the node to the ground node. According to the upper bound of $d_i$ in Eq.\,\eqref{eq:margin} visualized in Fig.\,\ref{fig:parameterspace*}, a negative shunt susceptance $\Gamma$ automatically ensures linear stability of the voltage subsystem.
We thus showed that the voltage subsystem at an equilibrium is \textbf{always} asymptotically stable if the parameters, in particular the shunt susceptance $\Gamma_i$, reflect Kirchhoff's nodal law.

\begin{figure}[!ht]
    \centering
    \includegraphics{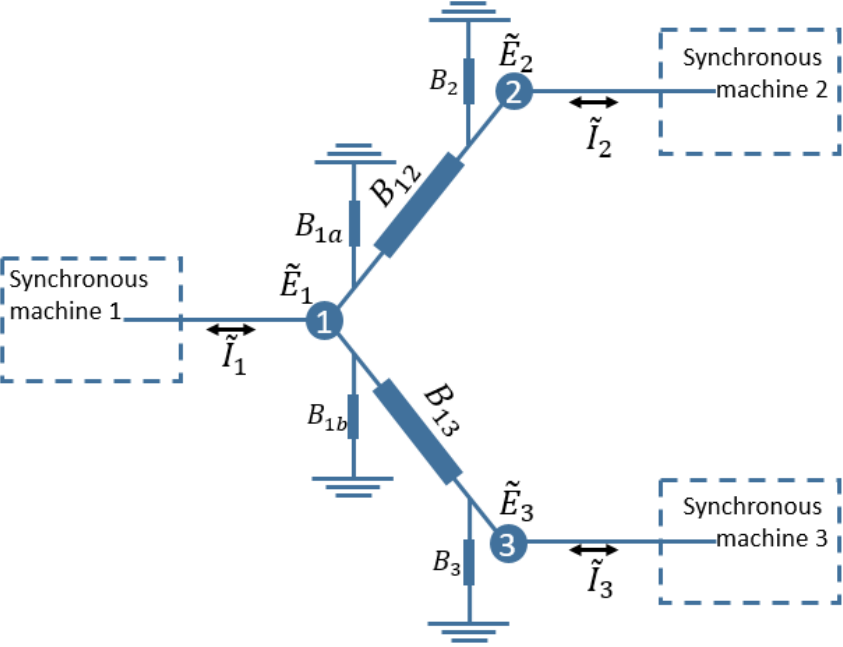}
    \caption{\textbf{Sample grid to explain the properties of the nodal susceptance matrix $\boldsymbol{B}$:} The sample grid consists of three interconnected machines. The variables $\tilde{E}_i$  stand for the voltage at the respective node. The variables $\tilde{I}_i$ represent the current that flows from each node to the attached dashed box.}
    \label{fig:sample_grid}
\end{figure}

By setting the shunt susceptance $\Gamma_i>0$ in recent literature\cite{Sharafutdinov2018,Schmietendorf2017} linearly unstable voltage subsystems were uncovered, which have to be considered nonphysical as violating Kirchhoff's nodal law. 

\section{Voltage collapse in third order synchronous machine dynamics.}
\begin{figure}[!h]
    \centering
    \includegraphics{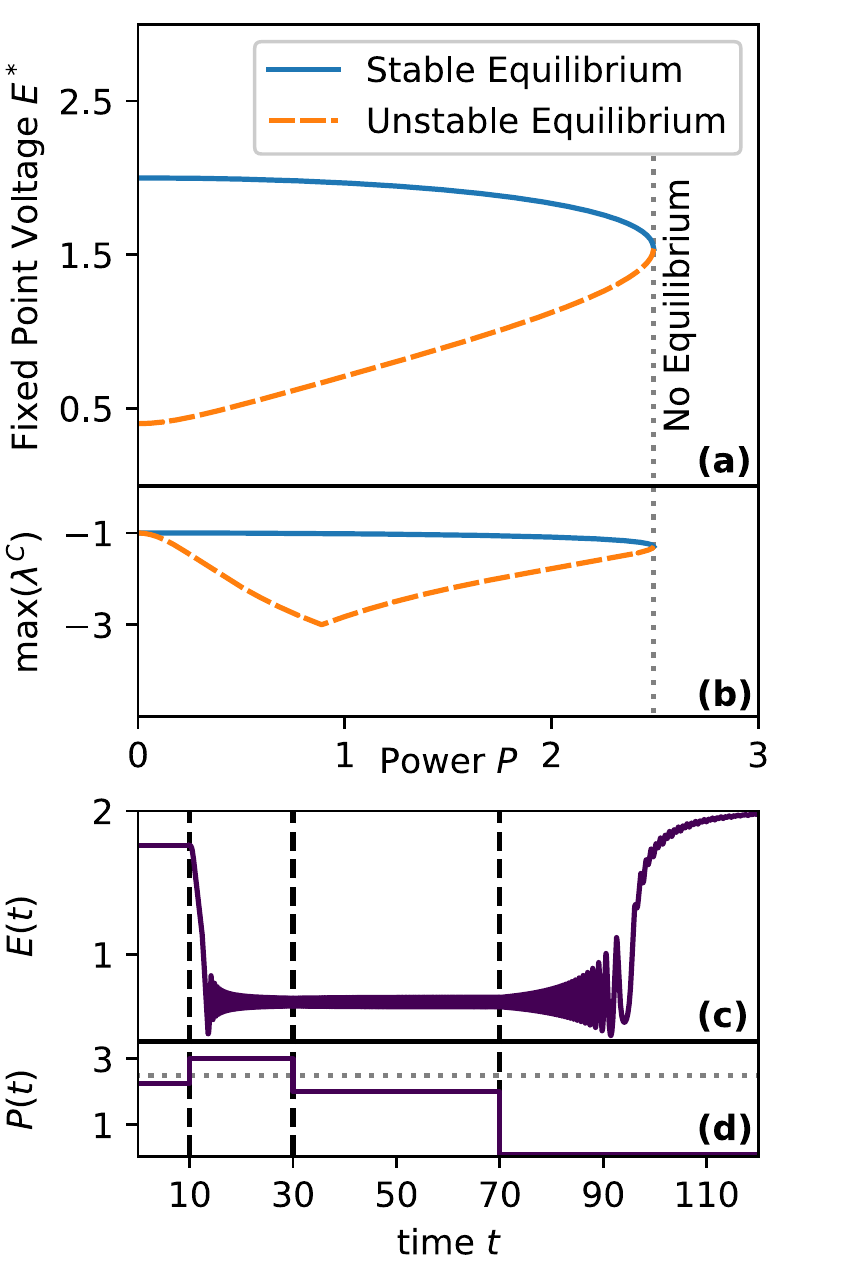}
    \caption{\textbf{Voltage collapse without pure voltage instability.} (a) Equilibrium voltage amplitude $E^*$ changes with the reactive power $P$ in a $N=2$ node system. The system exhibits two relatively large voltage equilibria, one stable (solid line) and one unstable (dashed line) that annihilate in a saddle-node bifurcation (near $P=2.5$), 
    beyond which no real solution for Eq.\,(\ref{eq:fp2_1},\ref{eq:fp2_2}) exists.
    (b) Consistently negative eigenvalues indicate that the voltage subsystem is linearly stable for the whole range of $P$ and the stable and unstable branch of the entire third order dynamics where the fixed point exists.
    (c,d) Time evolution of the voltage amplitude $E(t)$ upon temporally changing power $P(t)$. At $t=10$, the power demand $P$ increases (instantaneously) from $2.25$ to $3$, overloading the line in the current equilibrium. As a consequence, the voltage amplitudes of the system drop substantially, a phenomenon known as voltage collapse \cite{machowski2011power,SimpsonPorco2016}. Even after lowering $P$ again all the way to $2$ at $t=30$, the voltage amplitudes remain low. Only sufficiently small $P$ allow the system to relax back to the stable equilibrium as shown for $t>70$ where the power $P=0.1$. 
    Network parameters are $\Gamma_1 = \Gamma_2 = 0$, $B_{12} = B_{21} =  2$, $X=1.0$, $\alpha = 0.1 $ and $E^f = 2$. 
     }
    \label{fig:voltage_drop}
\end{figure}
Despite the fact that the physical third-order model of synchronous machines does not exhibit pure voltage instabilities, i.e., linearly unstable voltage subsystems, we emphasize that it still captures the phenomenon of voltage collapse, i.e., substantial voltage changes upon parameter changes.
Voltage collapse has been discussed as one of the root causes of various real world power outages\cite{machowski2011power,SimpsonPorco2016}. In this section, we illustrate numerically that the third order model of synchronous machines has the capability to undergo voltage collapse. The underlying cause, instead of a linear instability of the voltage subsystem, is a saddle-node bifurcation at which the existence of two equilibria is lost, including the stable one. The saddle-node bifurcation occurs when transmission line capacities at the respective voltage levels are not sufficient to meet the power demand $P$ of the consumer. We investigate this (see Figure~\ref{fig:voltage_drop}) for a simple system of $N=2$ nodes and one transmission line, as in section \ref{sec:conditions}.

Figure~\ref{fig:voltage_drop} displays the loss of existence of two equilibria upon parameter changes of the reactive power $P$ and a possible way of restoring higher voltage levels. Beyond the critical value of the power $P_{\text{max}} = B_{12} (E^*)^2$, see Eq.\,\eqref{eq:fp2_1} where $P_{\text{max}}$ has to compensate for the power $P$ that needs to be transported across a transmission line, the stable equilibrium is lost and the third order dynamics causes the voltage amplitudes to drop significantly. The third order model thus captures the phenomenon of voltage collapse. However, the root cause is not the loss of the stability of the voltage subsystem but a power overload of the transmission line and the related loss of equilibria. Even at $P$ below the previously valid critical value of the power $P_{\text{max}}$, the system does \textit{not} relax back to the stable equilibrium. Significantly smaller power values $P < B_{12} (E(t))^2 $ are needed to stabilize the power transmission. See Figure \ref{fig:voltage_drop} for details of the example.

\section{Conclusion}
In this article we have analyzed the possibility of pure voltage instabilities  in the third order model and explained why, due to violating Kirchhoff's nodal law, they cannot occur at physically consistent parameters. These findings contrast previously published work \cite{Sharafutdinov2018,Schmietendorf2017} that, however, correctly derived necessary conditions for and numerically exemplified linear stability. 
Interestingly, real world power outages have indeed been tied to effects described as voltage instabilities. However, this terminology referred to voltage drops\cite{Casavola2007,machowski2011power}, which we have observed numerically in the third order model upon changes of parameters without changes in stability of any operating state. The power overload of transmission lines is the root cause for the voltage collapse. We thus emphasize that the term voltage collapse is to be  carefully separated from the term voltage instability, which relies on the linear stability of the voltage subsystem. These two phenomena are  mathematically not connected.
Another class of power system models, given by algebraic differential equations, were studied extensively in the literature\cite{Ayasun2004,Kwatny1986} in terms of voltage collapse. The fundamental difference of that model class is that consumers are assumed to have fixed power angles, $\Theta$ as well as fixed reactive and active power demand. Thus, they represent algebraic constraints to the dynamics of the generators. In such a setup, linearly stable, low voltage equilibria may be identified. It is particularly difficult to operate a system that is trapped at such an equilibrium, and bring it back to a high voltage fixed state\cite{Ayasun2004,Kwatny1986}. A detailed analysis of a three bus system is given in\cite{Beardmore2000}. In contrast to the third order model of power grids, these extended models exhibit changes of local stability properties upon parameter changes.

For the third order model, it is sufficient to ensure that line capacity constraints are satisfied to ensure stable, high voltage operation. Given the results presented above, two research paths open up to further study voltage stability properties in power system models. First, one could factor in Ohmic losses, i.e., $G_{ij}>0$ and analyze whether local stability properties of the voltage subsystem undergo a bifurcation. Second, one could investigate non-local stability properties in the third order model by numerically analyzing basin stability\cite{menck2014} for voltage and rotor angle perturbations. The basin size may depend strongly on the line load. Furthermore, it would be of interest to extend the basin stability argument to the extended model of differential algebraic equations\cite{Ayasun2004,Kwatny1986}.

\section{Acknowledgements}
We thank Malte Schröder and Philip Marszal for valuable comments on the manuscript. We gratefully acknowledge support the Bundesministerium für Bildung und Forschung (BMBF, Federal Ministry of Education and Research) under Grant. No. 03EK3055F.

\section{Conflict of Interest}
The authors have no conflicts to disclose.

\section*{References}
\nocite{*}
\bibliographystyle{unsrtnat}
\bibliography{references}

\newpage 
\section{Appendix}

Here we motivate our statement in the main text about parameter configurations under which fixed points of Eq.\,(4a)-(4c), i.e., points where $\dot{\Theta}_i=\dot{\omega}_i= \dot{E}_i=0$ for all $i$, exist. 

\begin{theorem}
 The powers $P_i$ across the network of third order synchronous machines have to be in balance 
 \begin{align}
     0 = \sum_{i=1}^N P_i,  \label{eq:app_1}
 \end{align}
 to allow the entire system to settle to an equilibrium.
\end{theorem}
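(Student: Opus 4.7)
The plan is to show that the equilibrium conditions force the power imbalance to vanish through a symmetry argument on the coupling term of Eq.\,\eqref{eq:third_2}. First, I would observe that at any equilibrium the condition $\dot{\Theta}_i = 0$ combined with Eq.\,\eqref{eq:third_1} forces $\omega_i^* = 0$ for every node $i$. Consequently the damping term in Eq.\,\eqref{eq:third_2} drops out of the equilibrium condition, leaving
\begin{align}
0 = P_i + \sum_{j=1}^N B_{ij} E_i^* E_j^* \sin(\Theta_j^* - \Theta_i^*)
\end{align}
for every $i \in \{1,\ldots,N\}$.

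Next I would sum this identity over all $i$, yielding
\begin{align}
0 = \sum_{i=1}^N P_i + \sum_{i=1}^N \sum_{j=1}^N B_{ij} E_i^* E_j^* \sin(\Theta_j^* - \Theta_i^*).
\end{align}
The heart of the argument is then to observe that the double sum vanishes identically. This follows from the symmetry $B_{ij} = B_{ji}$ of the susceptance matrix together with the antisymmetry $\sin(\Theta_j^* - \Theta_i^*) = -\sin(\Theta_i^* - \Theta_j^*)$: swapping the dummy indices $i \leftrightarrow j$ maps each summand to its negative, so the double sum equals its own negative and must be zero. Consequently $\sum_{i=1}^N P_i = 0$, which is the claimed power balance.

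There is essentially no obstacle here beyond bookkeeping; the only subtlety worth stating explicitly is that $\omega_i^* = 0$ at equilibrium in the rotating reference frame (so one cannot interpret a nonzero collective frequency as an equilibrium of system \eqref{eq:third_1}--\eqref{eq:third_3} in these coordinates), and that the argument only uses symmetry of $B$ together with the lossless assumption $G_{ij}=0$ under which Eq.\,\eqref{eq:coup} holds. No assumption about the voltages $E_i^*$, phase differences $\Theta_j^* - \Theta_i^*$, or network topology is needed, so the power-balance condition \eqref{eq:app_1} is genuinely a necessary condition for equilibrium existence in full generality.
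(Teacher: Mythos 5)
Your proof is correct and follows essentially the same route as the paper: summing the equilibrium condition of Eq.\,\eqref{eq:third_2} over all nodes, using $\omega_i^*=0$ from Eq.\,\eqref{eq:third_1}, and letting the symmetry of $B_{ij}$ together with the antisymmetry of the sine annihilate the double sum. Your explicit remarks on the rotating reference frame and the index-swap argument are slightly more detailed than the paper's, but the mathematical content is identical.
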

\begin{proof}
To show that such balance is a necessary condition for the existence of a fixed point, we take the sum over all $N$ nodes\cite{machowski2011power,Sharafutdinov2018}  of the rotor angle equation Eq.\,(4b), yielding
\begin{align}
    0 = \sum_{i=1}^N (P_i - \alpha_i \omega_i^* ) + \sum_{i=1}^N \sum_{j=1}^N B_{ij} E_i^* E_j^* \sin(\Theta^*_j - \Theta^*_i).
\end{align}
The sine functions are antisymmetric, while $B$ is symmetric against an exchange of indices such that the double sum equals zero. 
Furthermore, Eq.\,(4a) implies 
that $\omega_i^* = 0$ for all $i$ and therefore
\begin{align}
    0 = \sum_{i=1}^N P_i.
\end{align}
\end{proof}

The second assertion is that the voltage set points $E^f_i$ have to be sufficiently large for a fixed point to exist. We identify the effective coupling strengths $K_{ij}$ in Eq.\,(4b)
\begin{align}
    K_{ij} = B_{ij}E_i^* E_j^*.
\end{align}
From the paradigmatic Kuramoto model\cite{Strogatz2000} it is known that the coupling strength needs to be sufficiently large in order to compensate the parameters $P_i$ for all $i$ to allow the system to settle in a phase locked state. Due to the first condition, Eq.\,\eqref{eq:app_1} a phase locked state is also an equilibrium. 
\begin{theorem}
 The equilibrium coupling strength $K_{ij} = B_{ij} E_i^* E_j^*$ of the rotor angle dynamics is bound by 
 \begin{align}
     K_{ij} \leq B_{ij} (E^f + X_i \mu )^2 \leq B_{ij} (E^f)^2  \label{eq:app_5},
 \end{align}
 with $\mu \leq 0$ for networks of $N$ third order synchronous machines with $E^f_i = E^f$ for all $i \in \{1,2,...,N\}$.
\end{theorem}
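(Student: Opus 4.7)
The plan is to extract an upper bound on the largest equilibrium voltage amplitude directly from the voltage fixed-point equation and then propagate it through the product $E_i^* E_j^*$. First I would evaluate Eq.\,\eqref{eq:third_3} at zero rate of change for the specific node $l$ satisfying $E_l^* \geq E_i^*$ for all $i$, yielding the identity $E_l^* = E^f + X_l I_l^*$. Since $B_{ij}\geq 0$ factors out of $K_{ij}$, the whole chain in Eq.\,\eqref{eq:app_5} reduces to exhibiting a non-positive $\mu$ with $E_l^* \leq E^f + X_l \mu$.

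Next I would decompose $I_l^*$ using the parametrization of the self-susceptance in Eq.\,\eqref{eq:paramter}, namely $B_{ll} = \Gamma_l - \sum_{k\neq l} B_{lk}$. This rearranges the current into a diagonal piece $\Gamma_l E_l^*$ plus the off-diagonal sum $\sum_{j\neq l} B_{lj}\bigl(E_j^*\cos(\Theta_j^*-\Theta_l^*) - E_l^*\bigr)$. The off-diagonal sum is non-positive because $E_j^*\cos(\cdot) \leq E_j^* \leq E_l^*$ in every case and the prefactors satisfy $B_{lj}\geq 0$. Combined with the Kirchhoff-motivated constraint $\Gamma_l \leq 0$ established earlier in the paper, this gives $I_l^* \leq 0$, so the natural choice is $\mu := I_l^*$.

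The remaining steps are largely bookkeeping: $E_l^* = E^f + X_l \mu$ bounds both $E_i^*$ and $E_j^*$ from above, hence $K_{ij} = B_{ij} E_i^* E_j^* \leq B_{ij}(E^f + X_l \mu)^2$; and since $X_l\mu \leq 0$ while $E^f + X_l\mu = E_l^* \geq 0$ keeps the expression inside $[0, E^f]$, squaring is monotone and yields the final bound $B_{ij}(E^f)^2$.

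The main obstacle is ensuring non-positivity of the off-diagonal sum for every sign pattern of the cosine terms: the argument has to cover the case $\cos(\Theta_j^*-\Theta_l^*)\geq 0$, where one uses maximality of $E_l^*$ together with $\cos\leq 1$, and the case $\cos < 0$, where $E_j^*\cos \leq 0 \leq E_l^*$ holds automatically, and then combine them uniformly. A smaller concern is the apparent index slip between $X_i$ in the theorem statement and $X_l$ in the corresponding main-text bound; one should adopt the maximizer convention consistently to make the chain of inequalities well-defined.
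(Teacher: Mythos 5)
Your proposal is correct and follows essentially the same route as the paper: evaluate the voltage fixed-point equation at the node $l$ with maximal amplitude, use $B_{lj}\geq 0$, $\cos(\cdot)\leq 1$, maximality of $E_l^*$, and $\Gamma_l \leq 0$ to conclude $E_l^* \leq E^f + X_l\mu \leq E^f$, and then bound $K_{ij} = B_{ij}E_i^*E_j^* \leq B_{ij}(E_l^*)^2$. The only cosmetic difference is your choice $\mu := I_l^*$ (the full equilibrium current, yielding the exact identity $E_l^* = E^f + X_l\mu$), whereas the paper takes $\mu = \sum_{j\neq l} B_{lj}(E_j^* - E_l^*)$ after bounding the cosines and handles the $X_l\Gamma_l E_l^*$ term separately --- both choices are non-positive under the same hypotheses, and your note on the $X_i$-versus-$X_l$ index slip correctly matches the main text's formulation with index $l$.
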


\begin{proof}
We prove that relation Eq.\,\eqref{eq:app_5} holds for every synchronous machine individually. We assume an equilibrium of the entire system $\boldsymbol{\Theta}^* \in \mathbb{R}^N, \boldsymbol{E}^* \in \mathbb{R}^N$ exists. We exploit the following properties
\begin{align}
    E^f_i = E^f &> 0 \nonumber \\
    E_i^* &> 0 \nonumber \\
    X_i &> 0 \nonumber \\
    B_{ij} &\geq 0 \nonumber \\
    \Gamma_i &\leq 0 \nonumber \\
    \cos(x) &\leq 1 \text{ for all } x\in  \mathbb{R}
\end{align}
for all $i \in \{1,2,...,N\}$. Among the finite number of equilibrium voltage amplitudes, $E_i^*$ we pick the largest $E_i^*$ such that for all $j\neq i$
\begin{align}
    E_i^* \geq E_j^*
\end{align}
holds. For the synchronous machine $i$ the voltage amplitude equilibrium defining equations reads
\begin{align}
    0 =& E^f + (X_i \Gamma_i - 1) E_i^* \nonumber \\ &+ X_i \sum_{j=1,j\neq i}^N  B_{ij} (E_j^* \cos(\Theta_j^* - \Theta_i^*)  - E_i^*).
\end{align}
We exploit that $B_{ij}$, $E_j^*$, $X_i$ are nonnegativ, as well as the upper bound of $\cos(x)$ to evaluate 
\begin{align}
  0 \leq & E^f + (X_i \Gamma_i - 1) E_i^*  + X_i \sum_{j=1,j\neq i}^N  B_{ij} (E_j^*  - E_i^*) \nonumber \\
     = & E^f + (X_i \Gamma_i - 1)E_i^* + X_i \mu 
\end{align}
with $\mu \leq 0$ due to the fact that we have chosen the largest voltage amplitude $E_i^*$. We conclude 
\begin{align}
    0 \leq & E^f + (X_i \Gamma_i - 1) E_i^*  + X_i\mu  \nonumber \\
      \leq & E^f + (X_i \Gamma_i - 1) E_i^*   \nonumber \\
      \leq & E^f - E_i^* \label{eq:app_7}
\end{align}
by exploiting that $\Gamma_i X_i \leq 0$. The latter is equivalent to 
\begin{align}
    E_i^* \leq E^f, \label{eq:app_6}
\end{align}
and as $E_i^*\geq E_j^*$ for all $j \in \{1,2,...,N\}$ we have shown that relation Eq.\,\eqref{eq:app_6} holds for all $i \in \{1,2,...,N\}$.
The coupling strength $K_{ij} $ is bounded by

\begin{align}
    K_{ij} = B_{ij} E_i^* E_j^* \leq  B_{ij} (E^f)^2.
\end{align}
From this, we conclude that the parameter $E^f$ has to be set sufficiently large in order to provide sufficient coupling strength for the system to settle into an equilibrium. 
Furthermore, we show that $X_i$ has to be sufficiently small to guarantee the existence of an equilibrium. For this we start with Eq.\,\eqref{eq:app_7}
\begin{align}
    0 \leq & E^f + (X_i \Gamma_i - 1) E_i^*  + X_i\mu \nonumber \\
     \leq & E^f - E_i^*  + X_i\mu 
\end{align}
which is equivalent to 
\begin{align}
    E_i^* \leq E^f + X_i \mu
\end{align}
with negative $\mu$ as $E_i^*$ is again the largest voltage amplitude in the network. Increasing $X_i$ thus lowers the upper bound for the largest voltage amplitude and the upper bound for all coupling strengths $K_{ij}$ can be refined to
\begin{align}
 K_{ij} = B_{ij} E_i^* E_j^*  \leq  B_{ij} (E^f + X_i \mu)^2 
\end{align}
\end{proof}

\end{document}